\documentclass[11pt]{amsart}
\usepackage[a4paper,margin=1in]{geometry}
\usepackage{amsmath,amssymb,amsthm,mathtools}
\usepackage{bm}
\usepackage{enumitem}
\usepackage{microtype}
\usepackage[hidelinks]{hyperref}
\usepackage[nameinlink,capitalise,noabbrev]{cleveref}
\usepackage{verbatim}

\newcommand{\C}{\mathbb C}
\newcommand{\R}{\mathbb R}
\newcommand{\bell}{{ \ell}}
\newcommand{\T}{\mathbb T}
\newcommand{\CP}{\mathbb {CP}}
\newcommand{\cH}{\mathcal H}
\newcommand{\cM}{\mathcal M}
\newcommand{\tcM}{\widetilde{\mathcal M}}

\newcommand{\cQ}{\mathcal Q}
\newcommand{\ip}[2]{\langle #1,#2\rangle}
\newcommand{\norm}[1]{\lVert #1\rVert}

\newcommand{\Span}{\operatorname{span}}
\newcommand{\vol}{\operatorname{vol}}

\theoremstyle{plain}
\newtheorem{theorem}{Theorem}[section]
\newtheorem{lemma}[theorem]{Lemma}
\newtheorem{proposition}[theorem]{Proposition}
\newtheorem{corollary}[theorem]{Corollary}

\theoremstyle{definition}
\newtheorem{definition}[theorem]{Definition}

\theoremstyle{remark}
\newtheorem{remark}[theorem]{Remark}

\title[The Minimum Number for Almost-Everywhere Complex Phase Retrieval]{The Minimum Number of Measurements for Almost-Everywhere Complex Phase Retrieval}
\author{Zhiqiang Xu}
\thanks{Zhiqiang Xu is supported by National Natural Science Foundation of China (Grant No. 12471361, 12288201).}
\address{State Key Laboratory of Mathematical Sciences, Academy of Mathematics and Systems Science, Chinese Academy of Sciences, Beijing 100190, China;    School of Mathematical Sciences, University of Chinese Academy of Sciences, Beijing 100049, China. }
\email{xuzq@lsec.cc.ac.cn}
\date{}

\begin{document}

\maketitle

\begin{abstract}
Let \(d\geq2\) and let
\(\bm{f}_1,\ldots,\bm{f}_m\in\mathbb C^d\).
We prove that if \(m\leq 2d-1\), then the intensity measurement map
\[
    \bm{x}\longmapsto
    \bigl(
        |\langle \bm{x},\bm{f}_1\rangle|^2,
        \ldots,
        |\langle \bm{x},\bm{f}_m\rangle|^2
    \bigr)
\]
fails to recover almost every signal in \(\mathbb C^d\) uniquely up to a
global phase factor. 
Combined with the known generic sufficiency of \(2d\) measurements, our
result establishes that the minimum number of measurements required for
almost-everywhere phase retrieval in \(\mathbb C^d\) is exactly \(2d\).
This resolves an open problem in phase retrieval by
determining the exact measurement threshold for almost-everywhere phase
retrieval in $\C^d$.
\end{abstract}

\section{Introduction}

\subsection{Problem setup}

Finite-dimensional phase retrieval concerns the reconstruction of a vector, up to an unavoidable global phase factor, from the magnitudes of its linear measurements.
Beyond its intrinsic mathematical interest, this problem plays a central role in a wide range of scientific and engineering applications.
To formulate the problem studied in this paper, we first introduce the necessary notation and definitions.
Set
$
    \T:=\{\omega\in\C:|\omega|=1\}.
$
Define global-phase equivalence on $\C^d$ by
$   \bm{x}\sim\bm{y}$
if there exists $\omega\in\T$ such that
$    \bm{y}=\omega\bm{x}\ \text{for some }\omega\in\T$.
We use $[\bm{x}]$ to denote the equivalent class containing $\bm{x}$, i.e, $[{\bm x}]:=\{{\bm y}\in \C^d: {\bm y}\sim {\bm x}\}$.
We set
$
    \cQ_d:=\C^d/\T.
$
For $F=(\bm{f}_j)_{j=1}^m\subset\C^d$, define the intensity map
\begin{equation}\label{eq:intensity-map}
    \cM_F:\cQ_d\longrightarrow\R^m,
    \qquad
    \cM_F([\bm{x}])
    :=\bigl(|\ip{\bm{x}}{\bm{f}_j}|^2\bigr)_{j=1}^m.
\end{equation}
This is well defined because multiplication of $\bm{x}$ by a unimodular scalar
does not change any coordinate.

\begin{definition}\label{def:prae}
The family $F=(\bm{f}_j)_{j=1}^m\subset\C^d$ is \emph{phase retrievable almost everywhere} in $\C^d$ (PR-ae) if
\[
    \cM_F^{-1}\bigl(\cM_F([\bm{x}])\bigr)=\{[\bm{x}]\}
\]
for almost every $[\bm{x}]\in\cQ_d$. Here,  
 \[
 \cM_F^{-1}\bigl(\cM_F([\bm{x}])\bigr):=\{[{\bm y}]\in \cQ_d: \cM_F([{\bm y}])=\cM_F([{\bm x}])\}.
\]
\end{definition}

In this paper, we focus on the following question:

\begin{center}
{\bf Question:} 
 What is the minimum number of measurements \(m\) for which there exists a family of vectors
 $F=(\bm{f}_j)_{j=1}^m\subset\C^d$ which is \emph{phase retrievable almost everywhere} in $\C^d$?
\end{center}

 More precisely, we ask
whether a specially structured family of $m=2d-1$ vectors is phase retrievable almost everywhere in $\C^d$, even though a generic family of
that size is known to fail (see \cite[Theorem 5.3]{HuangRongWangXu2021}).  In this paper, we prove that this is impossible for every
family, and hence determine the minimal measurement number for PR-ae in $\C^d$ is $m=2d$.

\subsection{Related work}

In \cite{BalanCasazzaEdidin2006},  Balan, Casazza, and Edidin
 formulated finite-dimensional phase retrieval in
terms of frames and established generic reconstruction results.  In
particular, their Theorem~3.4 shows that, for a generic family of at least
$2d$ complex vectors, the set of uniquely recoverable signals has dense
interior.    Fickus, Mixon,
Nelson, and Wang \cite{FickusMixonNelsonWang2014} developed corresponding
almost-injectivity methods in the real setting; see also
\cite{Mixon2015} for a broader discussion of phase transitions in phase
retrieval.
In particular, for the real case, it is known that the minimum  number of
measurements required  is \(m= d+1\) \cite{BalanCasazzaEdidin2006,FickusMixonNelsonWang2014}, and
the paper \cite{HuangRongWangXu2021} established a characterization of the
measurement vectors \(\{\bm f_j\}_{j=1}^m\subset\mathbb R^d\) that possess  PR-ae in $\R^d$.


Note that the phase retrieval measurement can be written as a
quadratic measurement
\(
    |\langle \bm{x},\bm{f}_j\rangle|^2
    =
    \bm{x}^*\bm{f}_j\bm{f}_j^*\bm{x}.
\)
Thus,  phase retrieval corresponds to the special case of quadratic
measurements in which the measurement matrices are rank-one positive
semidefinite matrices. Generalized phase retrieval extends this framework by
allowing the measurement matrices to be arbitrary Hermitian matrices rather
than restricted to the form \(\bm{f}_j\bm{f}_j^*\).
Wang and Xu~\cite{WangXu2019} developed a general theory for this
framework, relating the measurement complexity to low-rank matrix
recovery and topological constraints.
In \cite{HuangRongWangXu2021}, the authors subsequently introduced and analyzed
almost-everywhere generalized phase retrieval   for quadratic
measurements 
\[
    \bm{x}^*A_j\bm{x},\qquad j=1,\ldots,m,
\]
where \(A_1,\ldots,A_m\in\mathbb C^{d\times d}\) are Hermitian matrices.
 In the complex generalized setting, their
Theorem~1.4 constructs $2d-1$ Hermitian matrices with generalized PR-ae.  By contrast, in
the standard rank-one setting $A_j=\bm{f}_j\bm{f}_j^*$, their Theorem~5.3
proves that a \emph{generic} family of $2d-1$ vectors is not PR-ae.  Their
Theorem~1.2 also gives the generic sufficiency of $2d$ complex rank-one
measurements.

There is also a closely related formulation in quantum tomography. 
A POVM measurement of a pure quantum state corresponds to a positive
semidefinite generalized phase retrieval map on \(\mathbb{C}^d\).
Flammia, Silberfarb, and Caves~\cite{FlammiaSilberfarbCaves2005}, and
Finkelstein~\cite{Finkelstein2004}, constructed \(2d\)-outcome measurements
that achieve almost-everywhere recovery of generic pure states, the latter
in the rank-one setting. Heinosaari, Mazzarella, and Wolf~\cite{HeinosaariMazzarellaWolf2013}
further investigated topological obstructions to such generalized phase
retrieval problems.

The distinction between generic impossibility and universal impossibility is worth noting in this problem.
 A generic argument in \cite[Theorem 5.3]{HuangRongWangXu2021} only shows that a typical family
of $2d-1$  vectors fails to have the PR-ae property; it does not exclude
the possibility that a specially structured family of measurements may still
achieve almost-everywhere phase retrieval.
 Moreover, the discussion preceding
Lemma~5.2 of \cite{HuangRongWangXu2021}, in agreement with
\cite{Mixon2015}, points out that the previously claimed impossibility of
all \(2d-1\)-measurement systems in \cite{FlammiaSilberfarbCaves2005}  had not been established by a generally
accepted rigorous argument.
 Consequently,  to the best of our knowledge, the question of whether there
exists a family of \(2d-1\) vectors  with the PR-ae property remained an
open problem. The purpose of the present paper is to resolve this question
 by proving that no family of \(2d-1\) 
measurement vectors can have the PR-ae property.
\subsection{Our contribution}

We first state our main results as follows:

\begin{theorem}[Main theorem]\label{thm:main}
Let $d\geq2$ and let
\[
    F=(\bm{f}_1,\ldots,\bm{f}_{m})\subset\C^d,
\]
where $m\leq 2d-1$.
Then
\[
    \mathcal U_F
    :=\left\{[\bm{x}]\in\cQ_d:
       \cM_F^{-1}\bigl(\cM_F([\bm{x}])\bigr)=\{[\bm{x}]\}
      \right\}
\]
has measure zero in $\cQ_d$.  In particular, $F$ is not PR-ae.
\end{theorem}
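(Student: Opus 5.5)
The plan is to use the homogeneity of $\cM_F$ to pass to a map from the closed manifold $\CP^{d-1}$ into a simplex of the same dimension, and then use a mod-$2$ degree argument.

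\textbf{Reductions.} Adding measurement vectors can only shrink fibres, so $\mathcal U_F\subseteq\mathcal U_{F'}$ whenever $F\subseteq F'$. It therefore suffices to treat $m=2d-1$, padding $F$ with repeated vectors if necessary. If $F$ does not span $\C^d$, pick $\bm v\perp\bm f_j$ for all $j$. Then $\bm x$ and $\bm x+\bm v$ have the same intensities for every $\bm x$, and they are inequivalent for $\bm x$ outside a measure-zero set. Hence $\mathcal U_F$ is null, and we may assume $F$ spans $\C^d$. In that case $s(\bm x):=\sum_j|\ip{\bm x}{\bm f_j}|^2=\bm x^*S\bm x>0$ for $\bm x\neq0$, where $S$ is the frame operator.

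Define
\[
g:\CP^{d-1}\to\Delta:=\Bigl\{\bm y\in\R^{m}_{\ge0}:\textstyle\sum_j y_j=1\Bigr\},\qquad g([\bm x])=\cM_F([\bm x])/s(\bm x).
\]
Since $\cM_F(t\bm x)=|t|^2\cM_F(\bm x)$, the value $\cM_F([\bm x])$ determines $s(\bm x)$, hence the norm scale of $\bm x$. Consequently $[\bm x]\in\mathcal U_F$ if and only if $g^{-1}(g(\bm x))$ is a single point. The map $g$ is real-analytic from a closed manifold of real dimension $2d-2$ to the simplex $\Delta$, which also has dimension $m-1=2d-2$. Because $\cQ_d\setminus\{0\}\cong\CP^{d-1}\times(0,\infty)$ with the measure class respected, it suffices to show that the set of points of $\CP^{d-1}$ with singleton $g$-fibre is null.

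\textbf{Rank dichotomy.} Let $R\subset\CP^{d-1}$ be the open set where $dg$ has full rank $2d-2$.

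If $R=\emptyset$, then $g$ has rank $<2d-2$ everywhere. By analyticity, the maximal rank $r<2d-2$ is attained on an open dense set, where $g$ has constant rank. By the constant rank theorem, fibres through those points contain submanifolds of positive dimension, so almost every fibre is non-singleton.

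If $R\neq\emptyset$, then its complement $Z$ is a proper analytic subset, hence null. By Sard, the critical values $g(Z)$ form a null subset of $\Delta$. Since $g|_R$ is a local diffeomorphism, $g^{-1}(g(Z))\cap R$ is null. Thus for almost every $[\bm x]$, the value $\bm y=g([\bm x])$ is a regular value, and it lies in the interior of $\Delta$ because its boundary has measure zero.

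\textbf{Degree argument (key step).} Regard $g$ as a smooth map $\CP^{d-1}\to\R^{2d-2}\subset S^{2d-2}$ via an affine identification of $\Delta$'s hyperplane. Its image is compact and misses points of the sphere, so $g$ is homotopic to a constant and its mod-$2$ degree is $0$. Therefore every regular value has an even number of preimages. Since $\bm y=g([\bm x])$ has at least the preimage $[\bm x]$, it has at least two. Hence almost every $[\bm x]$ has a non-singleton fibre, and $\mathcal U_F$ is null.

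The main obstacle is the measure-theoretic bookkeeping. One must confirm that points whose fibre merely \emph{contains} a critical point are still negligible, which is handled by the local-diffeomorphism argument on $R$. One must also confirm that the non-full-rank case gives genuinely positive-dimensional fibres almost everywhere, which uses analyticity and the constant rank theorem. The topological core, degree zero of a non-surjective map to a sphere, is straightforward.
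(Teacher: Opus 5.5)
Your proposal is correct and follows essentially the same route as the paper: reduce to $m=2d-1$ with $F$ spanning, normalize by $\bm x^*S\bm x$ to get a real-analytic map $\CP^{d-1}\to\R^{2d-2}$, use the constant rank theorem when the maximal rank is deficient, and use the mod-$2$ degree when it is full (the map misses a point, so every regular value has an even number of preimages). The one difference is in the full-rank case, where your Sard and local-diffeomorphism bookkeeping is not needed: the paper notes that if $[\bm x]$ is a regular point with a singleton fibre, then $g([\bm x])$ is automatically a regular value with exactly one preimage, so every singleton-fibre point already lies in the null critical set.
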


Theorem~1.2 of \cite{HuangRongWangXu2021} states that, for $m = 2d$, a
generic family of $m$ complex vectors has the PR-ae property.  Historically, Theorem~3.4 of Balan, Casazza, and
Edidin \cite{BalanCasazzaEdidin2006} provided an earlier generic
dense-interior uniqueness result in the complex setting.
Combining Theorem \ref{thm:main} with the known generic upper bound gives
the exact threshold.

\begin{corollary}\label{cor:exact-minimum}
For every $d\geq 2$, the minimum number of vectors in a family
$F=(\bm{f}_1,\ldots,\bm{f}_m)\subset \C^d$
having the PR-ae property is exactly $2d$.
\end{corollary}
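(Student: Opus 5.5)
The corollary follows by combining a lower bound from \cref{thm:main} with an existence result for $m=2d$ from \cite{HuangRongWangXu2021}. We show that no family of at most $2d-1$ vectors is PR-ae, and that some family of exactly $2d$ vectors is.

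\emph{Lower bound.} Fix $d\ge 2$ and any family $F=(\bm f_1,\ldots,\bm f_m)\subset\C^d$ with $m\le 2d-1$. By \cref{thm:main}, the set $\mathcal U_F$ of uniquely recoverable classes has measure zero in $\cQ_d$. For $F$ to be PR-ae, its complement $\cQ_d\setminus\mathcal U_F$ would have to be null, since \cref{def:prae} demands uniqueness for almost every $[\bm x]$. Both sets cannot be null at once, because $\cQ_d$ has positive measure: it is the quotient of $\C^d\cong\R^{2d}$ by the free-off-the-origin $\T$-action, and we take its measure to be the pushforward of Lebesgue measure, or equivalently the natural smooth measure on the $(2d-1)$-dimensional manifold $(\C^d\setminus\{0\})/\T$. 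Hence $F$ is not PR-ae. The case $m<2d-1$ needs no separate padding argument, since \cref{thm:main} is stated for every $m\le 2d-1$. So the minimum is at least $2d$.

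\emph{Upper bound.} For $m=2d$ we invoke Theorem~1.2 of \cite{HuangRongWangXu2021}. It says that a generic family of $2d$ vectors in $\C^d$ is PR-ae. Generic families exist, since the exceptional set is a proper (null) subset of $\C^{d\times 2d}$, so at least one family of $2d$ vectors has the PR-ae property. Therefore the minimum is exactly $2d$.

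The only point requiring care is the measure-theoretic convention on $\cQ_d$. We must check that the notion of null set used in \cref{thm:main} agrees with the one in \cref{def:prae} and in the cited generic result, and that $\cQ_d$ is not itself null under it. Both hold for the pushforward of Lebesgue measure. All of the substantive difficulty lies in \cref{thm:main} itself.
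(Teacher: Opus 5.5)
Your proposal is correct and follows the paper's argument: the lower bound is Theorem~\ref{thm:main} applied to every $m\le 2d-1$, together with the fact that $\cQ_d$ is not null; your pushforward-of-Lebesgue convention has the same null sets as the paper's polar-coordinate measure class. The upper bound is the generic sufficiency of $2d$ measurements from Theorem~1.2 of \cite{HuangRongWangXu2021}, which gives the existence of a PR-ae family of $2d$ vectors.
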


\begin{remark}

The same argument extends to quadratic measurements induced by positive
semidefinite matrices. In particular, we obtain the following consequence.
Let $d\geq2$ and let $A_1,\ldots,A_{2d-1}\succeq0$ be Hermitian matrices on
$\C^d$.  Then the quadratic measurement map
\[
    [\bm{x}]\longmapsto
    (\bm{x}^*A_1\bm{x},\ldots,\bm{x}^*A_{2d-1}\bm{x})
\]
is not PR-ae. 
There is no contradiction with Theorem~1.4 of
\cite{HuangRongWangXu2021}, which  constructs $2d-1$ generalized
measurements with PR-ae.   
The distinction lies in the fact that the generalized measurements considered in \cite{HuangRongWangXu2021} allow arbitrary Hermitian
operators and are therefore not restricted to positive semidefinite
quadratic measurements.

\end{remark}

\begin{remark}
To make the phrase ``almost everywhere" precise, we specify a natural measure class on \(\cQ_d\). Every nonzero phase class determines uniquely its norm and the complex line it spans. Hence
\begin{equation}\label{eq:polar-quotient}
    \cQ_d\setminus\{[0]\}
    \cong (0,\infty)\times\CP^{d-1},
    \qquad
    [\bm{x}]
    \longmapsto
    \bigl(\norm{\bm{x}},[\bm{x}]_{\C^*}\bigr),
\end{equation}
where \([\bm{x}]_{\C^*}\) denotes the complex line spanned by \(\bm{x}\). We equip this product with the measure class induced by any smooth positive density and declare \(\{[0]\}\) to have measure zero. This definition is independent of the chosen density: locally, the ratio of any two smooth positive densities is a positive continuous function, and hence is bounded above and below by positive constants on relatively compact coordinate charts. Since the manifold admits a countable cover by such charts, all smooth positive densities have the same null sets. This convention agrees with that used in \cite[Section~2]{HuangRongWangXu2021}.
\end{remark}

\section{Preliminaries and Auxiliary Results on Maps Between Manifolds}
\label{sec:lem}
This section introduces several definitions and auxiliary results on smooth
manifolds and maps between manifolds that will be used in the proof of our
main theorem.

We  use the following standard fact from real-analytic geometry. If
\(\Omega\subset\mathbb{R}^n\) is a connected open set and
\(f:\Omega\to\mathbb{R}\) is a real-analytic function that is not identically
zero, then
\[
    \bigl\{u\in\Omega:f(u)=0\bigr\}
\]
has \(n\)-dimensional Lebesgue measure zero. A short proof of this fact is
given by Mityagin~\cite{Mityagin2020}.

We first specify what is meant by a null set on a manifold. Let \(M\) be an
\(n\)-dimensional smooth manifold, where
$
    n:=\dim_{\mathbb{R}}M.
$
A subset \(E\subset M\) is said to have \emph{measure zero in \(M\)} if, for
every smooth coordinate chart
$
    \phi:U\longrightarrow\phi(U)\subset\mathbb{R}^n,
$
the coordinate image
$
    \phi(E\cap U)
$
has \(n\)-dimensional Lebesgue measure zero in \(\mathbb{R}^n\). Here
\(U\subset M\) is an open set and \(\phi\) is a diffeomorphism from \(U\)
onto the open subset \(\phi(U)\) of \(\mathbb{R}^n\).
This definition is independent of the choice of coordinates. 

The following lemma is useful in our argument. 

\begin{lemma}\cite[Chapter~3, \S1, Exercise~4(a), p.~74]{Hirsch1976}
\label{lem:hirsch-critical}
Let \(X\) be a connected real-analytic manifold and let \(Y\) be a
real-analytic manifold. Suppose that
\[
    F:X\longrightarrow Y
\]
is a real-analytic map. Define the critical set of \(F\) by
\[
    \operatorname{Crit}(F)
    :=
    \left\{
        x\in X:
        \operatorname{rank}(DF_x)<\dim_{\mathbb R}Y
    \right\},
\]
where
$
    DF_x:T_xX\longrightarrow T_{F(x)}Y
$
denotes the differential of \(F\) at \(x\), and
$
    \operatorname{rank}(DF_x)
    :=
    \dim_{\mathbb R}\operatorname{Im}(DF_x).
$
If
$
    \operatorname{Crit}(F)\neq X,
$
then
$
    F^{-1}
    \bigl(
        F(\operatorname{Crit}(F))
    \bigr)
$
has measure zero in \(X\).
Here,
$
    F^{-1}(F(\operatorname{Crit}(F)))
    :=
    \{x\in X:F(x)\in F(\operatorname{Crit}(F))\}.
$

\end{lemma}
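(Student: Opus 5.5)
The plan is to split $X$ into the critical set $\operatorname{Crit}(F)$ and the regular set $R:=X\setminus\operatorname{Crit}(F)$, and to show that $F^{-1}(F(\operatorname{Crit}(F)))$ is null in each piece:
\[
    F^{-1}\bigl(F(\operatorname{Crit}(F))\bigr)\subset \operatorname{Crit}(F)\,\cup\,\bigl(R\cap F^{-1}(F(\operatorname{Crit}(F)))\bigr).
\]
I would prove three things in order: (i) $\operatorname{Crit}(F)$ has measure zero in $X$; (ii) $F(\operatorname{Crit}(F))$ has measure zero in $Y$; (iii) on $R$, the preimage under $F$ of a null set of $Y$ is null. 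Throughout, set $n:=\dim_{\R}Y$. Since $\operatorname{Crit}(F)\neq X$, some differential has rank $n$, so $\dim_{\R}X\geq n$.

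For (i), work in real-analytic charts $U$ of $X$ and $V$ of $Y$ with $F(U)\subset V$. There I would define $g_U$ as the sum of squares of all $n\times n$ minors of the Jacobian matrix of $F$. This $g_U$ is real-analytic, and $\operatorname{Crit}(F)\cap U=\{g_U=0\}$. Now let $W$ be the interior of $\operatorname{Crit}(F)$; it is open.
\begin{itemize}
    \item $W$ is closed. Take $p\in\overline W$ and a connected chart $U\ni p$ as above. Then $g_U$ vanishes on the nonempty open set $W\cap U$. By the identity theorem for real-analytic functions on the connected set $U$, $g_U\equiv0$ on $U$. Hence $U\subset\operatorname{Crit}(F)$ and $p\in W$.
    \item $W=\varnothing$. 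Since $X$ is connected and $W$ is clopen, $W=\varnothing$ or $W=X$. The latter contradicts $\operatorname{Crit}(F)\neq X$.
\end{itemize}
Consequently $g_U\not\equiv0$ on every connected chart. By the standard fact quoted above (Mityagin), $\{g_U=0\}$ is Lebesgue-null in coordinates. Covering $X$ by countably many such charts, which is possible by second countability, gives that $\operatorname{Crit}(F)$ is null. I expect this clopen/identity-theorem step to be the main point needing care: connectedness of $X$ is used exactly here, and one must make sure the charts chosen are connected.

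Step (ii) is Sard's theorem applied to the smooth map $F$. Note that the critical set defined in the statement is the usual set of points where $DF_x$ is not surjective.

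For (iii), I would use the constant rank (submersion) theorem: near each $x\in R$, $DF_x$ is surjective, so there are coordinates in which $F$ is the projection $\R^{k}\times\R^{n}\to\R^{n}$, $(u,v)\mapsto v$, where $k=\dim_{\R}X-n$. In these coordinates the preimage of a null set $N\subset\R^n$ is $\R^k\times N$ intersected with the chart, which is null by Fubini. Covering $R$ by countably many such neighbourhoods shows that $R\cap F^{-1}(F(\operatorname{Crit}(F)))$ is null. Combining this with (i) and the displayed inclusion yields that $F^{-1}(F(\operatorname{Crit}(F)))$ has measure zero in $X$.
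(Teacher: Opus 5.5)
Your proof is correct. The paper gives no proof of this lemma; it only cites Hirsch's exercise, so there is no argument in the paper to compare against, and yours is the standard solution:
\begin{enumerate}[label=(\roman*)]
\item The identity theorem on connected charts, together with connectedness of $X$, shows that $\operatorname{Crit}(F)$ has empty interior. The real-analytic zero-set fact quoted at the start of Section~2 then makes $\operatorname{Crit}(F)$ null.
\item Sard's theorem makes $F(\operatorname{Crit}(F))$ null in $Y$.
\item The submersion normal form, combined with Fubini and a countable cover, makes the preimage of that null set null on the open regular set $X\setminus\operatorname{Crit}(F)$.
\end{enumerate}
Step (i) is where analyticity and connectedness are genuinely needed. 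A smooth map that is constant on a nonempty open set, but not everywhere critical, shows the conclusion fails without them. That is presumably why the paper records the zero-set fact immediately before stating the lemma.
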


\begin{lemma}
\label{lem:max-rank-locus}

Let \(M\) be a connected real-analytic manifold of finite real
dimension
\(
    n:=\dim_{\mathbb R}M ,
\)
and let
\(
    N\geq1
\)
be an integer. Let
\(
    q:M\longrightarrow\mathbb R^N
\)
be a real-analytic map.

For every \(x\in M\), let
\[
    Dq_x:T_xM\longrightarrow
    T_{q(x)}\mathbb R^N\cong\mathbb R^N
\]
denote the differential of \(q\) at \(x\). Define
\(
    \operatorname{rank}(Dq_x)
    :=
    \dim_{\mathbb R}\operatorname{Im}(Dq_x).
\)
Let
\(
    r
    :=
    \max_{x\in M}
    \operatorname{rank}(Dq_x).
\)
Then
$
    0\leq r\leq \min\{n,N\}.
$
Assume that
\(
    r\geq1.
\)
Define the lower-rank locus
\[
    \Sigma_r
    :=
    \left\{
        x\in M:
        \operatorname{rank}(Dq_x)<r
    \right\}.
\]
Then \(\Sigma_r\) has measure zero in \(M\). 
\end{lemma}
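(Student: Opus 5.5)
The plan is to show that \(\Sigma_r\) is the zero set of a real-analytic function that does not vanish identically, and then apply the real-analytic zero-set fact quoted above, chart by chart. The bound \(0\le r\le\min\{n,N\}\) needs no argument: \(Dq_x\) is a linear map from an \(n\)-dimensional space to \(\mathbb R^N\).

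Since \(M\) is second countable, it is covered by countably many real-analytic charts \(\phi_k:U_k\to\phi_k(U_k)\subset\mathbb R^n\). We may take each \(\phi_k(U_k)\) to be a connected open set (for instance a ball). A countable union of null sets is null, so it suffices to prove that \(\phi_k(\Sigma_r\cap U_k)\) has Lebesgue measure zero for each \(k\). Fix a chart and set \(g:=q\circ\phi_k^{-1}\), a real-analytic map on the connected open set \(\Omega:=\phi_k(U_k)\). Its Jacobian \(J(u)\in\mathbb R^{N\times n}\) has real-analytic entries, and \(\operatorname{rank}J(\phi_k(x))=\operatorname{rank}(Dq_x)\).

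Define
\[
    h(u):=\sum_{I,J}\bigl(\det J(u)_{I,J}\bigr)^2,
\]
where the sum runs over all \(r\times r\) minors, indexed by row sets \(I\) and column sets \(J\). Then \(h\) is real-analytic on \(\Omega\). Since the rank never exceeds \(r\), we have \(\operatorname{rank}J(u)<r\) exactly when every \(r\times r\) minor vanishes, that is, when \(h(u)=0\). Hence \(\phi_k(\Sigma_r\cap U_k)=\{h=0\}\). By the quoted fact, this set is null provided \(h\not\equiv0\) on \(\Omega\).

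The main obstacle is showing that \(h\not\equiv 0\) on every chart, not just on the chart containing a point of maximal rank. I would handle it with an identity-theorem argument on \(M\). Let \(W\) be the set of points \(x\in M\) that have a neighborhood on which \(\operatorname{rank}(Dq)<r\) identically.
\begin{itemize}
\item \(W\) is open by definition.
\item \(W\) is closed. Take \(x\in\overline W\) and a connected chart around \(x\). The function \(h\) vanishes on a nonempty open subset of that chart (the chart meets \(W\)). By the identity theorem for real-analytic functions on connected open sets, \(h\equiv0\) on the chart, so \(x\in W\).
\item \(W\ne M\), since by the definition of \(r\) some \(x_0\) has \(\operatorname{rank}(Dq_{x_0})=r\), and then \(x_0\notin W\).
\end{itemize}
Since \(M\) is connected, \(W=\emptyset\). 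So no chart carries \(h\equiv0\), which completes the argument. The hypothesis \(r\ge1\) ensures that the \(r\times r\) minors are genuine determinants; for \(r=0\) the set \(\Sigma_0\) would be empty anyway. Lemma~\ref{lem:hirsch-critical} is not needed here.
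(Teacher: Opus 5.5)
Your proof is correct, but it takes a different route from the paper.

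\textbf{The paper's route.} It picks a point $x_0$ of maximal rank and chooses a linear map $P:\mathbb R^N\to\mathbb R^r$ that restricts to an isomorphism on $\operatorname{Im}(Dq_{x_0})$. It then sets $F:=P\circ q:M\to\mathbb R^r$. The inequality $\operatorname{rank}(DF_x)\le\operatorname{rank}(Dq_x)$ gives $\Sigma_r\subseteq\operatorname{Crit}(F)$. Since $x_0\notin\operatorname{Crit}(F)$, Lemma~\ref{lem:hirsch-critical} applies, so $\Sigma_r\subseteq F^{-1}(F(\operatorname{Crit}(F)))$ is null. This is short, but all the analytic content is delegated to the cited exercise of Hirsch. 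It also invokes a stronger conclusion than needed: nullity of the whole preimage of the critical values, which rests on a Sard-type input. Only nullity of $\operatorname{Crit}(F)$ is actually used.

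\textbf{Your route.} You prove that needed part directly. Chartwise, $\Sigma_r$ is the zero set of the sum of squares of the $r\times r$ minors. Your open--closed argument on $W$ then makes explicit how connectedness of $M$ carries the nonvanishing of $h$ from the chart containing $x_0$ to every chart. In the paper this step is hidden inside the hypotheses ``$X$ connected, $\operatorname{Crit}(F)\neq X$'' of Hirsch's exercise.

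\textbf{What each buys.} The paper gains brevity at the cost of a black box. Your argument is self-contained apart from the zero-set fact already quoted in Section~\ref{sec:lem}. You do not even need the identity theorem as a separate ingredient: a nonempty open set has positive Lebesgue measure, so the quoted fact itself forces $h\equiv 0$ on a connected chart once $h$ vanishes on an open subset. Your argument also applies verbatim to any real-analytic matrix-valued function, not only to Jacobians.
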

\begin{proof}

Since \(r\) is the maximal rank of the differential of \(q\), by the
definition of the maximum there exists a point
$
  x_0 \in M
$
such that
$
    \operatorname{rank}(Dq_{x_0})=r .
$

Define the image space
\[
    V_0
    :=
    \operatorname{Im}(Dq_{x_0})
    \subseteq\mathbb R^N .
\]
Because
$
    \dim_{\mathbb R}V_0
    =
    \operatorname{rank}(Dq_{x_0}),
$
we have
$
    \dim_{\mathbb R}V_0=r .
$

Choose a linear map
$
    P:\mathbb R^N\longrightarrow\mathbb R^r
$
such that the restriction
$
    P|_{V_0}:V_0\longrightarrow\mathbb R^r
$
is a linear isomorphism.
Such a map exists by elementary linear algebra: choose a basis
$
    v_1,\ldots,v_r
$
of \(V_0\), extend it to a basis of \(\mathbb R^N\), and define
\(P\) by mapping \(v_j\) to the \(j\)-th standard basis vector of
\(\mathbb R^r\) and all remaining basis vectors to zero.

Now define
\[
    F:=P\circ q:
    M\longrightarrow\mathbb R^r .
\]
Because \(q\) is real analytic and \(P\) is linear, the map \(F\) is
real analytic.
For every \(x\in M\), the chain rule gives
$
    DF_x
    =
    P\circ Dq_x .
$
Indeed, the differential of the linear map \(P\) is \(P\) itself,
and hence
\[
    D(P\circ q)_x
    =
    DP_{q(x)}\circ Dq_x
    =
    P\circ Dq_x .
\]

At the point \(x_0\), we obtain
\[
\begin{aligned}
    \operatorname{Im}(DF_{x_0})
    &=
    \operatorname{Im}
    (P\circ Dq_{x_0})                                      \\
    &=
    P(\operatorname{Im}(Dq_{x_0}))                         \\
    &=
    P(V_0).
\end{aligned}
\]
Since
$
    P|_{V_0}:V_0\rightarrow\mathbb R^r
$
is an isomorphism,
$
    P(V_0)=\mathbb R^r .
$
Therefore,
$
    \operatorname{rank}(DF_{x_0})=r.
$

Define the critical set of \(F\) by
\[
    \operatorname{Crit}(F)
    :=
    \left\{
        x\in M:
        \operatorname{rank}(DF_x)<r
    \right\}.
\]
Since
$
    \operatorname{rank}(DF_{x_0})=r,
$
we have
$
    x_0\notin\operatorname{Crit}(F).
$
Consequently,
$
    \operatorname{Crit}(F)\neq M .
$
Therefore the assumptions of Lemma~\ref{lem:hirsch-critical} are satisfied. Hence
\(
    F^{-1}
    \bigl(
        F(\operatorname{Crit}(F))
    \bigr)
\)
has measure zero in \(M\).

We claim 
$
    \Sigma_r\subseteq\operatorname{Crit}(F),
$
which implies  
\[
    \Sigma_r
    \subseteq
    \operatorname{Crit}(F)
    \subseteq
    F^{-1}
    \bigl(
        F(\operatorname{Crit}(F))
    \bigr).
\]
We arrive at the conclusion. 

It remains to prove that
$
    \Sigma_r\subseteq\operatorname{Crit}(F).
$
Let
$
    x\in\Sigma_r .
$
Then, by definition,
$
    \operatorname{rank}(Dq_x)<r .
$
Since
$
    DF_x=P\circ Dq_x,
$
the image of \(DF_x\) satisfies
\[
    \operatorname{Im}(DF_x)
    =
    P(\operatorname{Im}(Dq_x)).
\]
Therefore,
$
    \operatorname{rank}(DF_x)
    \leq
    \operatorname{rank}(Dq_x).
$
Hence,
$
    \operatorname{rank}(DF_x)
    <
    r .
$
By the definition of the critical set,
$
    x\in\operatorname{Crit}(F).
$
Thus,
$
    \Sigma_r\subseteq\operatorname{Crit}(F).
$

\end{proof}

We next recall the mod-\(2\) degree theorem for smooth maps between
manifolds of the same dimension.
We first introduce the definition of the regular value of a map.

\begin{definition}\label{de:reg}
Let \(X\) and \(Y\) be smooth manifolds without boundary, both of real
dimension \(n\). Assume that \(X\) is compact and that \(Y\) is connected,
and let
\[
    f:X\longrightarrow Y
\]
be a smooth map.
A point \(y\in Y\) is called a \emph{regular value} of \(f\) if, for every
\(x\in f^{-1}(\{y\})\), the differential
\[
    Df_x:T_xX\longrightarrow T_yY
\]
is surjective. In particular, every point \(y\in Y\setminus f(X)\) is a
regular value because \(f^{-1}(\{y\})\) is empty.
\end{definition}

\begin{theorem}( \cite[\S4, pp.~20--25]{Milnor1997}).
\label{thm:mod-two-degree}
Let \(X\), \(Y\), and \(f:X\to Y\) be as in Definition~\ref{de:reg}.
If \(y,z\in Y\) are regular values of \(f\), then their fibers are finite
and satisfy
\[
    \# f^{-1}(\{y\})
    \equiv
    \# f^{-1}(\{z\})
    \pmod 2,
\]
where \(\#E\) denotes the cardinality of a finite set \(E\).
\end{theorem}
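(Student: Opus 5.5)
The statement in question is the classical mod-$2$ degree theorem, and I would follow Milnor's argument. It has four steps: finiteness of regular fibres, local constancy of the fibre count, homotopy invariance via a one-dimensional cobordism, and a homogeneity lemma that moves $y$ to $z$. Throughout, $X$ is compact without boundary, $Y$ is connected without boundary, and $\dim X=\dim Y=n$.

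\emph{Finiteness.} Let $y$ be a regular value and $x\in f^{-1}(\{y\})$. Then $Df_x$ is a surjective linear map between spaces of equal dimension $n$, so it is an isomorphism. By the inverse function theorem, $f$ is a diffeomorphism from a neighbourhood $U_x$ of $x$ onto an open set. In particular $U_x\cap f^{-1}(\{y\})=\{x\}$, so the fibre is discrete. The fibre is also closed in the compact space $X$, hence finite.

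\emph{Local constancy.} Write $f^{-1}(\{y\})=\{x_1,\dots,x_k\}$ and choose pairwise disjoint neighbourhoods $U_i$ of the $x_i$, each mapped diffeomorphically onto an open set by $f$. Set
\[
    V:=\bigcap_i f(U_i)\setminus f\Bigl(X\setminus \bigcup_i U_i\Bigr).
\]
The removed set is the image of a compact set, hence closed, so $V$ is an open neighbourhood of $y$. Every $y'\in V$ has exactly $k$ preimages, one in each $U_i$, and every such $y'$ is a regular value. Hence $y\mapsto \#f^{-1}(\{y\})$ is locally constant on the set of regular values.

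\emph{Homotopy lemma.} Let $f,g:X\to Y$ be smoothly homotopic via $H:X\times[0,1]\to Y$, and let $y$ be a regular value of both $f$ and $g$. I claim $\#f^{-1}(\{y\})\equiv\#g^{-1}(\{y\})\pmod 2$.
\begin{enumerate}[label=(\roman*)]
\item By the previous step, the fibre counts of $f$ and of $g$ are constant on a neighbourhood $V$ of $y$.
\item By Sard's theorem, $V$ contains a point $y'$ that is simultaneously a regular value of $H$, of $H|_{X\times\{0\}}$ and of $H|_{X\times\{1\}}$. It therefore suffices to prove the claim for $y'$.
\item For such $y'$, the preimage $H^{-1}(\{y'\})$ is a compact $1$-manifold with boundary, and its boundary is $f^{-1}(\{y'\})\times\{0\}\,\cup\, g^{-1}(\{y'\})\times\{1\}$.
\item By the classification of compact $1$-manifolds, each component is a circle or an arc, so the number of boundary points is even. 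Hence $\#f^{-1}(\{y'\})+\#g^{-1}(\{y'\})$ is even.
\end{enumerate}
I expect this step to be the main obstacle. It needs the preimage theorem for manifolds with boundary, the careful Sard-type choice of $y'$, and the classification of compact one-dimensional manifolds. I would cite these from Milnor rather than reprove them.

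\emph{Homogeneity and conclusion.} First I would show that for any two points $y,z$ of the connected manifold $Y$ there is a diffeomorphism $h:Y\to Y$, smoothly isotopic to the identity, with $h(y)=z$.
\begin{enumerate}[label=(\roman*)]
\item Inside a coordinate ball, integrate a compactly supported vector field that pushes the centre to any prescribed interior point. This gives such an $h$ for nearby points.
\item The relation ``$y$ can be moved to $z$ in this way'' is an equivalence relation with open classes. Since $Y$ is connected, there is only one class.
\end{enumerate}
Now $h\circ f$ is homotopic to $f$. Moreover $(h\circ f)^{-1}(\{z\})=f^{-1}(\{y\})$, and $z$ is a regular value of $h\circ f$ because $y$ is a regular value of $f$ and $h$ is a diffeomorphism. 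Since $z$ is also a regular value of $f$ by hypothesis, the homotopy lemma gives
\[
    \#f^{-1}(\{z\})\equiv\#(h\circ f)^{-1}(\{z\})=\#f^{-1}(\{y\})\pmod 2,
\]
which is the statement of Theorem~\ref{thm:mod-two-degree}.
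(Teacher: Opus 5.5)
Your proposal is correct and follows the same route the paper relies on: the paper gives no proof of its own and simply cites Milnor, \S4. Your four steps reproduce Milnor's proof step for step: finiteness and local constancy of regular fibre counts, the homotopy lemma via the compact $1$-manifold $H^{-1}(\{y'\})$ with $y'$ chosen by Sard, and the homogeneity lemma applied to $h\circ f$.
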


\begin{theorem}(\cite[Theorem~4.12, pp.~81--82]{Lee2013})\label{th:cons}
Suppose that \(F:X\to Y\) is smooth and that
\[
    \operatorname{rank}(DF_z)=s
\]
for every \(z\) in a neighborhood of a point \(x\in X\).
Then there exist smooth coordinate charts centered at \(x\) and
\(F(x)\) in which \(F\) has the local normal form
\[
    (u_1,\ldots,u_n)
    \longmapsto
    (u_1,\ldots,u_s,0,\ldots,0).
\]
\end{theorem}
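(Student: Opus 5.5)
Since this is the standard constant rank theorem, quoted from \cite{Lee2013}, I would prove it by reducing to Euclidean space and applying the inverse function theorem twice: once in the source, once in the target. Write \(n=\dim X\) and \(p=\dim Y\).

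\emph{Step 1: reduction.} Take arbitrary smooth charts centered at \(x\) and \(F(x)\). This reduces the claim to a smooth map \(F:U\to\R^p\), where \(U\subset\R^n\) is an open neighborhood of \(0\), \(F(0)=0\), and \(\operatorname{rank}DF_z=s\) for all \(z\in U\). Since \(DF_0\) has rank \(s\), some \(s\times s\) minor of the Jacobian is nonzero. After permuting the source and target coordinates (a linear change of charts), we may assume this is the upper-left block \(\bigl(\partial F^i/\partial u_j(0)\bigr)_{i,j\le s}\). Write \(u=(u',u'')\in\R^s\times\R^{n-s}\) and \(F=(Q,R)\in\R^s\times\R^{p-s}\).

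\emph{Step 2: straighten the source.} Define \(\varphi(u):=(Q(u),u'')\). Its Jacobian at \(0\) is block triangular with invertible diagonal blocks \(\partial Q/\partial u'(0)\) and \(I_{n-s}\). By the inverse function theorem, \(\varphi\) is a diffeomorphism from a neighborhood \(U_0\) of \(0\) onto a neighborhood \(V_0\) of \(0\). We may shrink \(V_0\) to a cube \(V_0=C'\times C''\), which is the key choice for Step 3. Then \(G:=F\circ\varphi^{-1}\) satisfies \(G(v)=(v',\widetilde R(v))\) for a smooth \(\widetilde R\). Its Jacobian is
\[
    DG_v=\begin{pmatrix} I_s & 0\\ \partial\widetilde R/\partial v' & \partial\widetilde R/\partial v''\end{pmatrix}.
\]
Because \(\varphi\) is a diffeomorphism, \(DG_v\) still has rank exactly \(s\) for every \(v\in V_0\). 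The first \(s\) rows are already independent, so the rank can equal \(s\) only if \(\partial\widetilde R/\partial v''\equiv 0\) on \(V_0\). Since each slice \(\{v'\}\times C''\) is connected (indeed convex), \(\widetilde R\) is independent of \(v''\). Hence \(\widetilde R(v)=S(v')\) for a smooth \(S\) defined on \(C'\).

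\emph{Step 3: straighten the target.} On the open set \(W:=\{w=(w',w'')\in\R^s\times\R^{p-s}: w'\in C'\}\), define \(\psi(w):=(w',\,w''-S(w'))\). This is a diffeomorphism of \(W\) onto itself, with inverse \((w',w'')\mapsto(w',w''+S(w'))\). It satisfies \(\psi(0)=0\), since \(S(0)=\widetilde R(0)=0\). Moreover \(G(V_0)\subset W\), and
\[
    \psi\circ G(v)=(v',\,S(v')-S(v'))=(v',0).
\]
So \(\psi\circ F\circ\varphi^{-1}\) has the asserted normal form on \(V_0\). Composing \(\varphi\) and \(\psi\) with the original charts and permutations gives the required charts centered at \(x\) and \(F(x)\).

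\emph{Main obstacle.} The one delicate point is the constant-rank hypothesis used in Step 2. It is what forces \(\partial\widetilde R/\partial v''=0\), and the passage from this to independence of \(v''\) needs the fibers in \(v''\) to be connected. I would secure that by shrinking \(V_0\) to a product of cubes before invoking the rank argument. Everything else is routine use of the inverse function theorem and the chain rule.
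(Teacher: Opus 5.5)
Your proof is correct. It is essentially the standard proof of the rank theorem in the cited source (Lee, Theorem~4.12), which the paper quotes without proof: straighten the source with $\varphi(u)=(Q(u),u'')$ via the inverse function theorem, use constant rank on a cube-shaped domain to make $\widetilde R$ depend only on $v'$, then straighten the target with $\psi(w)=(w',w''-S(w'))$. The bookkeeping you leave implicit is routine: shrink the source chart so that the rank is constant on it and it maps into the target chart, and take the new target chart domain to be the preimage of $W$ intersected with the original chart domain.
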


The following Theorem is a direct consequence of
Theorem~\ref{thm:mod-two-degree}, Theorem~\ref{th:cons}, and
Lemma~\ref{lem:max-rank-locus}. 

\begin{theorem}
\label{thm:singleton-fiber}
Let \(M\) be a compact, connected real-analytic manifold without boundary,
with real dimension
\(
    n:=\dim_{\mathbb{R}}M\geq1.
\)
Let
\(
    q:M\longrightarrow\mathbb{R}^n
\)
be a real-analytic map. Set
\[
    U_q
    :=
    \left\{
        x\in M:
        q^{-1}\bigl(\{q(x)\}\bigr)=\{x\}
    \right\}.
\]
Then \(U_q\) has measure zero in \(M\).

\end{theorem}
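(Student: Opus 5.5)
Let \(r:=\max_{x\in M}\operatorname{rank}(Dq_x)\) and split into the two cases \(r<n\) and \(r=n\). In both cases the plan is to show that, outside a null set, every point of \(M\) has a fiber containing at least one other point.

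\emph{Case \(r<n\).} If \(r=0\), then \(Dq\equiv0\) on the connected manifold \(M\), so \(q\) is constant and \(U_q=\emptyset\). If \(r\geq1\), Lemma~\ref{lem:max-rank-locus} shows that \(\Sigma_r\) has measure zero. Now take \(x\notin\Sigma_r\). Lower semicontinuity of rank gives rank exactly \(r\) on a neighborhood of \(x\), so Theorem~\ref{th:cons} applies. In the resulting normal form \((u_1,\dots,u_n)\mapsto(u_1,\dots,u_r,0,\dots,0)\) with \(r<n\), the fiber through \(x\) contains the whole local slice \(\{u_1=\dots=u_r=\text{const}\}\), which has positive dimension \(n-r\). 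Hence \(x\notin U_q\), and therefore \(U_q\subseteq\Sigma_r\) is null.

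\emph{Case \(r=n\).} Here \(\Sigma_n=\operatorname{Crit}(q)\), and it is null by Lemma~\ref{lem:max-rank-locus}. Lemma~\ref{lem:hirsch-critical} gives more: \(E:=q^{-1}(q(\operatorname{Crit}(q)))\) has measure zero. Take \(x\notin E\) and set \(y:=q(x)\). Then \(y\) is a regular value of \(q\), since no critical point lies over it. Because \(M\) is compact, \(q(M)\) is compact and hence a proper subset of \(\R^n\). Choose any \(z\in\R^n\setminus q(M)\); it is a regular value with empty fiber. Theorem~\ref{thm:mod-two-degree}, applied with \(X=M\) and \(Y=\R^n\) (connected, without boundary, same dimension), gives
\[
\#q^{-1}(\{y\})\equiv\#q^{-1}(\{z\})=0 \pmod 2.
\]
Since \(x\in q^{-1}(\{y\})\), the fiber has even cardinality and at least one element, so it contains at least two points. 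Thus \(x\notin U_q\), and \(U_q\subseteq E\) is null.

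The main subtlety is the case \(r=n\): the parity argument requires \(y\) to be a regular value, which is exactly why the stronger statement of Lemma~\ref{lem:hirsch-critical} (the saturation \(E\) of the critical set is null, not merely the critical set itself) is needed rather than Lemma~\ref{lem:max-rank-locus} alone. It also uses the noncompactness of the target \(\R^n\) to produce a regular value with empty fiber. In the case \(r<n\), the only point to check is that Theorem~\ref{th:cons} is applied at points of locally constant rank, which is guaranteed because \(r\) is the maximal rank.
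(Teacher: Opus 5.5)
Your proof is correct and follows the paper's argument: the same split into $r=0$, $1\le r<n$ (Lemma~\ref{lem:max-rank-locus} plus Theorem~\ref{th:cons}), and $r=n$ (the mod-$2$ degree comparison with a value outside the compact image $q(M)$). The only difference is in the case $r=n$. You use the nullity of the saturation $q^{-1}(q(\operatorname{Crit}(q)))$ from Lemma~\ref{lem:hirsch-critical}, whereas the paper shows directly that $U_q\subseteq\Sigma_n$: for $x\in U_q$ the fiber is $\{x\}$, so noncriticality of $x$ alone already makes $q(x)$ a regular value. Your closing claim that the stronger saturation statement is \emph{needed} is therefore not accurate, although your route is valid and even shows that the fiber through almost every point has even cardinality.
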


\begin{proof}
 For each \(x\in M\), let
\[
    Dq_x:T_xM\longrightarrow
    T_{q(x)}\mathbb{R}^n\cong\mathbb{R}^n
\]
denote the differential of \(q\) at \(x\).
Set
\(
    r:=\max_{x\in M}\operatorname{rank}(Dq_x).
\)

If \(r=0\), then \(Dq\equiv0\), and hence \(q\) is  constant. Since \(\dim_{\mathbb R}M\geq1\),
no fiber is a singleton. Here we use \(M\) is connected.

Suppose that \(1\leq r<n\). By
Lemma~\ref{lem:max-rank-locus}, the set
\[
    \Sigma_r
    =
    \{x\in M:\operatorname{rank}(Dq_x)<r\}
\]
is null. If \(x\notin\Sigma_r\), then an \(r\times r\) Jacobian minor is
nonzero at \(x\), and hence remains nonzero near \(x\). By the maximality
of \(r\), the rank is therefore identically \(r\) near \(x\). According to
Theorem \ref{th:cons}, the local fiber through \(x\) has
dimension
\(
    n-r\geq1.
\)
Thus \(x\notin U_q\), so \(U_q\subseteq\Sigma_r\). According to Lemma~\ref{lem:max-rank-locus}, $U_q$ has measure zero.

Finally, suppose that \(r=n\). Again, the critical set
\[
    \Sigma_n
    =
    \{x\in M:\operatorname{rank}(Dq_x)<n\}
\]
is null. 
For the aim of contradiction, we assume that $U_q\not\subseteq \Sigma_n$.
This implies that there exists a point \(x\in U_q\setminus\Sigma_n\).
Since \(x\notin\Sigma_n\), the point \(q(x)\) is a regular value of \(q\).
Moreover, because \(x\in U_q\), by the definition of \(U_q\) we have
\(
    \#q^{-1}(\{q(x)\})=1.
\)
Since \(M\) is compact, \(q(M)\) is compact, so one may choose
\(y_0\in\mathbb R^n\setminus q(M)\). Then \(y_0\) is also a regular value,
with
\(
    \#q^{-1}(\{y_0\})=0.
\)
Theorem \ref{thm:mod-two-degree}
 implies that the cardinalities of
the fibers over any two regular values have the same parity, giving
\[
    1\equiv0\pmod2,
\]
a contradiction. Hence \(U_q\subseteq\Sigma_n\). Following  Lemma~\ref{lem:max-rank-locus}, $U_q$ has measure zero.

\end{proof}

\section{Proof of \Cref{thm:main}}

In this section, we apply the auxiliary results established in
Section~\ref{sec:lem} to prove the main theorem,
Theorem~\ref{thm:main}.

\begin{proof}[Proof of \Cref{thm:main}]
It is enough to consider the case where
\(
    m:=2d-1.
\)
Recall that
our aim is to show that
\[
    \mathcal U_F
    :=\left\{[\bm{x}]\in\cQ_d:
       \cM_F^{-1}\bigl(\cM_F([\bm{x}])\bigr)=\{[\bm{x}]\}
      \right\}
\]
has measure zero in $\cQ_d$ where $F=(\bm{f}_j)_{j=1}^m\subset \C^d$.

A simple argument shows that the conclusion holds if  \(F\) does not span \(\C^d\).
We now assume that \(F\) spans \(\C^d\). Define
\[
    S:=\sum_{j=1}^{m}\bm{f}_j\bm{f}_j^*.
\]
Then \(S\) is positive definite. 
Define
\(
    \tcM_F:\CP^{d-1}\longrightarrow\mathbb R^m
\)
by
\[
    \tcM_F([\bm{u}]_{\C^*})
    :=
    \left(
        \frac{|\bm{f}_j^*\bm{u}|^2}
             {\bm{u}^*S\bm{u}}
    \right)_{j=1}^{m},
    \qquad \bm{u}\neq\bm{0}.
\]
This map is well defined because both the numerator and denominator are
multiplied by \(|c|^2\) when \(\bm{u}\) is replaced by
\(c\bm{u}\), where \(c\in\C\setminus\{0\}\). It is real analytic in
the standard projective coordinate charts. Moreover,
\[
    \sum_{j=1}^{m}
    \frac{|\bm{f}_j^*\bm{u}|^2}
         {\bm{u}^*S\bm{u}}
    =1,
\]
so the image of \(\tcM_F\) is contained in the affine hyperplane
\[
    \cH_m
    :=
    \left\{
        \bm{t}\in\mathbb R^m:
        \sum_{j=1}^{m}t_j=1
    \right\}.
\]
Since
\[
    \dim_{\mathbb R}\cH_m
    =
    m-1
    =
    2d-2
    =
    \dim_{\mathbb R}\CP^{d-1},
\]
we fix an affine identification
$
    \cH_m\cong\mathbb R^{2d-2}.
$
With this identification understood, we henceforth regard \(\tcM_F\) directly as a map
\[
    \tcM_F:
    \CP^{d-1}\longrightarrow\mathbb R^{2d-2},
\]
and retain the same notation \(\tcM_F\) for this map.

The manifold \(\CP^{d-1}\) is compact, connected, real analytic, and
without boundary. Therefore, \Cref{thm:singleton-fiber} implies that
\[
    B_F
    :=
    \left\{
        \bell\in\CP^{d-1}:
        \tcM_F^{-1}(\tcM_F(\bell))=\{\bell\}
    \right\}
\]
has measure zero in \(\CP^{d-1}\).
Under the polar decomposition
\[
    \cQ_d\setminus\{[0]\}
    \cong
    (0,\infty)\times\CP^{d-1},
\]
write a nonzero phase class as \((r,\bell)\in (0,\infty)\times\CP^{d-1}\), represented by
\(\bm{x}=r\bm{u}_\bell\), where \(\|\bm{u}_\bell\|=1\) and
\([\bm{u}_\bell]_{\C^*}=\bell\). 
By a slight abuse of notation, we use the same symbol \(\cM_F\) for the restriction of \(\cM_F\) to \(\cQ_d\setminus{[0]}\), expressed in these polar coordinates, and write \(\cM_F(r,\bell):=\cM_F([r\bm{u}_\bell])\). This definition is independent of the choice of the unit representative \(\bm{u}_\bell\).

We claim that 
\begin{center}
the phase class \((r,\bell)\in (0,\infty)\times\CP^{d-1}\) has a singleton fiber under
\(\cM_F\) if and only if \(\bell\in B_F\). 
\end{center}
Hence
\(
    \mathcal U_F
    =
    \{[0]\}
    \cup
    \bigl((0,\infty)\times B_F\bigr).
\)
Since \(B_F\) is null in \(\CP^{d-1}\), Tonelli's theorem shows that
\(
    (0,\infty)\times B_F
\)
is null.
By the measure convention on \(\cQ_d\), the singleton \(\{[0]\}\) is
also null. Therefore,
\(
    \mathcal U_F
\)
has measure zero in \(\cQ_d\), as claimed.

It remains to prove the claim that the phase class \((r,\bell)\in(0,\infty)\times\CP^{d-1}\) has a singleton fiber under \(\cM_F\) if and only if \(\bell\in B_F\).
 For
\(\bell\in\CP^{d-1}\), choose a unit vector \(\bm{u}_\bell\) spanning
\(\bell\), and define
\[
    a_F(\bell)
    :=
    \bigl(|\bm{f}_j^*\bm{u}_\bell|^2\bigr)_{j=1}^{m},
    \qquad
    \sigma_F(\bell)
    :=
    \sum_{j=1}^{m}|\bm{f}_j^*\bm{u}_\bell|^2
    =
    \bm{u}_\bell^*S\bm{u}_\bell.
\]
These quantities are independent of the choice of the unit representative
of \(\bell\), and
\[
    \sigma_F(\bell)>0,
    \qquad
    \tcM_F(\bell)
    =
    \frac{a_F(\bell)}{\sigma_F(\bell)}.
\]
A simple observation is that
\(
    \cM_F(r,\bell)
    =
    r^2a_F(\bell).
\)

For $(r,\bell), (r',\bell') \in(0,\infty)\times\CP^{d-1}$,
suppose that
\begin{equation}\label{eq:cMFF}
  \cM_F(r,\bell)= \cM_F(r',\bell'),
\end{equation}
i.e.,$ r^2a_F(\bell)
    =
    r'^2a_F(\bell')$.
Summing the coordinates gives
\(
    r^2\sigma_F(\bell)
    =
    r'^2\sigma_F(\bell').
\)
Under the assumption in \eqref{eq:cMFF}, it also follows that
 $(r,\bell)\neq (r',\bell')$ if and only if $\bell\neq \bell'$.
Dividing the measurement vectors by this common positive number yields
\[
    \tcM_F(\bell)=\tcM_F(\bell').
\]
Conversely, if \(\tcM_F(\bell)=\tcM_F(\bell')\), then choosing
\(
    r'
    :=
    r\sqrt{\frac{\sigma_F(\bell)}
                  {\sigma_F(\bell')}}
\)
gives
\(
    r'^2a_F(\bell')
    =
    r^2a_F(\bell),
\) i.e., $
  \cM_F(r,\bell)= \cM_F(r',\bell')
$.
Consequently, the phase class \((r,\bell)\) has a singleton fiber under
\(\cM_F\) if and only if \(\bell\in B_F\). Hence
\[
    \mathcal U_F
    =
    \{[0]\}
    \cup
    \bigl((0,\infty)\times B_F\bigr).
\]
Here \([0]\) is a singleton fiber because \(S\) is positive definite.

\end{proof}

\section{Conclusion}

In this paper, we establish that the minimum number of measurements required
for almost-everywhere phase retrieval (PR-ae) in \(\mathbb{C}^d\) is exactly
\(2d\). The main ingredient of our proof is a differential-topological
approach based on smooth maps between manifolds. In contrast to previous
approaches relying primarily on algebraic geometry and the study of
algebraic varieties \cite{HuangRongWangXu2021}, our method exploits the geometric properties of smooth
maps, such as regular values, critical points, and the structure of
singleton fibers. This provides a different perspective for analyzing
almost-everywhere injectivity problems.

We believe that the same manifold-based framework may also be useful for
studying almost-everywhere low-rank matrix recovery problems. For example,
in \cite{RongWangXu2021}, one of the fundamental questions is to characterize
the minimum number of linear measurements required to guarantee that a
generic rank-\(r\) matrix can be uniquely recovered. 
The smooth manifold structure of fixed-rank matrix varieties suggests that
the techniques developed in this paper may provide a new approach toward
such problems.

\vspace{0.5cm}

\textbf{Statement on the Use of AI:}

During the preparation of this manuscript, the authors used ChatGPT (OpenAI) as a supporting tool for mathematical exploration, language refinement, and improvement of exposition. All AI-assisted suggestions and generated text were carefully examined, mathematically verified, and substantially revised or rewritten by the author. The authors take full responsibility for the accuracy, validity, and final form of the manuscript.

\end{document}